\documentclass[11pt,a4paper]{article}
\usepackage{latexsym,amsfonts,amsthm,amsmath,amssymb,cases}
\usepackage{cite}
\usepackage{enumerate}
\usepackage{graphicx}
\usepackage{color}
\usepackage{bm}

\numberwithin{equation}{section}
\newtheorem{thm}{Theorem}[section]

\newtheorem{lemma}{Lemma}[section]

\newtheorem{remark}{Remark}[section]

\newtheorem{defn}{Definition}[section]

\newcommand{\ds}{\displaystyle}
\def\mathrm{\mbox}

\numberwithin{remark}{section}

 \textheight 9.3in
 \textwidth 6.4in
 \topmargin -15mm
 \oddsidemargin 0mm
 \evensidemargin 0mm
 \parskip 2mm
 \parindent 2em

\begin{document}
\title{{\Large \bf Asymptotic behavior for a finitely degenerate semilinear pseudo-parabolic equation}\thanks{This work was supported by the National Natural Science Foundation of China (12171339).}}
\author{Xiang-kun Shao$^a$, Xue-song Li$^a$, Nan-jing Huang$^a $\footnote{Corresponding author.  E-mail addresses: nanjinghuang@hotmail.com; njhuang@scu.edu.cn} and Donal O'Regan$^b$\\
{\small\it a. Department of Mathematics, Sichuan University, Chengdu, Sichuan 610064, P.R. China}\\
{\small\it b. School of Mathematical and  Statistical Sciences,  University of Galway, Ireland}}
\date{}
\maketitle
\begin{center}
\begin{minipage}{5.5in}
\noindent{\bf Abstract.} This paper investigates the initial boundary value problem of a finitely degenerate semilinear pseudo-parabolic equation associated with H\"{o}rmander's operator. Based on the global existence of solutions in previous literature, the exponential decay estimate of the energy functional is obtained. Moreover, by developing some novel estimates about solutions and using the energy method, the upper bounds of both blow-up time and blow-up rate and the exponential growth estimate of blow-up solutions are determined. In addition, the lower bound of blow-up rate is estimated when a finite time blow-up occurs. Finally, it is established that as time approaches infinity, the global solutions strongly converge to the solution of the corresponding stationary problem. These results complement and improve the ones obtained in the previous literature.
\\ \ \\
{\bf Keywords:} Finitely degenerate pseudo-parabolic equation; Decay estimate; Blow-up time; Blow-up rate.
\\ \ \\
{\bf 2020 Mathematics Subject Classification}: 35K65; 35B44; 35B40.
\end{minipage}
\end{center}

\section{Introduction}
\quad\quad We discuss the initial-boundary value problem (IBVP) of the finitely degenerate semilinear pseudo-parabolic equation:
\begin{align}\label{1.1}
 \left\{\begin{array}{ll}
     \ds \phi_{t}-\Delta_X\phi-\Delta_X\phi_t=|\phi|^{p-1}\phi,\ \ \ &x\in\Omega,\ t>0,\\
     \ds \phi(x,t)=0,&x\in\partial\Omega,\ t>0,\\
     \ds \phi(x,0)=\phi_0(x),\ \ &x\in\Omega,
    \end{array}\right.
\end{align}
where $X=(X_1,X_2,\cdots,X_{m-1},X_m)$ is a system of $C^\infty$ smooth vector fields defined on an open domain $\tilde{\Omega}\subset\mathbb{R}^n$ $(n\geq2)$, $\Omega\subset\subset \tilde{\Omega}$ is open and bounded, every $X_i$ is a formally skew-adjoint operator, i.e. $X_i^*=-X_i$, the operator $\Delta_X$ is defined by
\begin{equation*}
\Delta_X:=\sum\limits_{i=1}\limits^{m}X_i^2,
\end{equation*}
and $\phi_0(x)\in H_{X,0}^1(\Omega)$ where $H_{X,0}^1(\Omega)$ is defined below. Furthermore, we assume that
\begin{description}
  \item{$(A_1)$} $\partial\Omega$ is $C^\infty$ smooth and non-characteristic for $X$;
  \item{$(A_2)$} $X$ satisfies the H\"{o}rmander's condition on $\tilde{\Omega}$ (see \cite{MR222474}):
      For each point $x\in\tilde{\Omega}$, there exists an integer $Z>0$ such that $X_1,\cdots,X_m$ and all their commutators with length up to $k\leq Z$ span the tangent space. Here, the $k$-th order commutators of $X_1,\cdots,X_m$ are defined as
      \begin{equation*}
      X_I=[X_{i_1},[X_{i_2},\cdots[X_{i_{k-1}},X_{i_k}]\cdots]],\quad 1\leq i_j\leq m,
      \end{equation*}
      where $I=(i_1,i_2,\cdots,i_k)$, and $k=|I|$ denotes the length of the commutator. The minimal positive integer $Z$ satisfying the H\"{o}rmander's condition is called the H\"{o}rmander's index of $\tilde{\Omega}$ related to $X$;
  \item{$(A_3)$} The parameters $p$ satisfies
\begin{equation*}
p\in\left(1,\frac{r+2}{r-2}\right),
\end{equation*}
where $r>2$ is the generalized M\'{e}tivier index of $\Omega$ related to $X$, defined as follows (see \cite{MR427858,MR3412394,MR3354435}): Let $(A_2)$ hold and $Z$ be the H\"{o}rmander's index. For each point $x\in\tilde{\Omega}$, denote by $V_l(x)$~$(1\leq l\leq Z)$ the subspace of the tangent space at $x$ spanned by all commutators of $X_1,\cdots,X_m$ with length at most $l$, and let $v_l(x)$ be its dimension. Define \begin{equation*}
v(x):=\sum_{l=1}^Zl(v_l(x)-v_{l-1}(x)),\hbox{ where }v_0(x):=0.
\end{equation*}
For $\Omega\subset\subset \tilde{\Omega}$, the generalized M\'{e}tivier index of $\Omega$ related to $X$ is defined as
\begin{equation*}
r:=\max_{x\in\overline{\Omega}}v(x).
\end{equation*}
\end{description}

When $X=(\partial_{x_1},\partial_{x_2},\cdots,\partial_{x_n})$ (here, the H\"{o}rmander's index $Z=1$ and the generalized M\'{e}tivier index $r=n$), $\Delta_X$ is reduced to the standard Laplacian operator $\Delta$.
Now we recall some facts about the following classical pseudo-parabolic equation
\begin{equation}\label{1.2}
\phi_t-\Delta\phi-\Delta\phi_t=f(\phi),
\end{equation}
which can be used to describe the unsaturated flow in porous media with dynamic capillary pressure \cite{MR2593053}, the diffusion of substance through solids or gases \cite{MR586062,MR1575106}, the infiltration of a uniform liquid through fissured rocks \cite{Barenblatt1960theory}, the non-stationary processes in semiconductors in the presence of sources \cite{MR2198478,MR2290551}, the heat conduction involving two temperatures \cite{chen1968theory}, and so on.
In recent decades, research on the IBVP and the Cauchy problem of \eqref{1.2} has attracted widespread attention from scholars. In particular, \eqref{1.2} with a polynomial source $f$ being $\phi^q$ or $|\phi|^{q-1}\phi$ has aroused much  interest  and existence, uniqueness, asymptotic behavior and blow-up properties of solutions was considered  in the literature \cite{MR2523294,MR2981259,MR3045640,MR3372307,MR3745341}. For more results on pseudo-parabolic equation, we refer the reader to  \cite{MR437936,MR330774,MR264231,MR3987484,MR2454799,MR4145829,MR4295069,Cao2024review,MR4839772} and the references therein.

Returning to our consideration, note  H\"{o}rmander \cite{MR222474} in 1967 firstly introduced the concepts of H\"{o}rmander's condition, finite degenerate vector field and H\"{o}rmander's operator, and proved the hypoellipticity of H\"{o}rmander's operator. Since the vector fields $X$ in \eqref{1.1} satisfies the H\"{o}rmander's condition i.e. assumption $(A_2)$, one says that $X$ is finite degenerate and H\"{o}rmander's operator $\Delta_X$ is a finitely degenerate elliptic operator. The relevant physical background and mathematical properties of such operators can be found in \cite{MR427858,MR512213,MR762360,MR865430,MR3154431,MR2200233,MR3412394,MR222474,MR3354435}. A large number of researchers have studied PDEs involving finitely degenerate elliptic operators and significant progress was made, see, for example, \cite{MR1087376,MR1305697,MR2200233,MR222474,MR3961341,MR4068818,MR4709392}, etc, and it is now still an active field of research. It is worth noting that the classical non-degenerate equation describes the isotropic process in natural phenomena, which is quite idealized, while the degenerate equation often reveals the characteristic of anisotropy in the studied substance, which is sometimes more in line with reality. Therefore, the finitely degenerate semilinear pseudo-parabolic equation \eqref{1.1} may be applied to some physical phenomena with anisotropic processes and is worth studying.

Recently, Chen and Xu \cite{MR3961341} used a series of potential wells as tools to prove the existence of global and blow-up solutions of \eqref{1.1} under subcritical and critical initial energies, and estimated both the exponential decay of global solutions and a lower bound of blow-up time for blow-up solutions. Subsequently, Liu et al. \cite{MR4854159} proved the global existence and finite time blow-up of solutions to problem \eqref{1.1} under supercritical initial energy, established the existence of blow-up solutions for arbitrary initial energy levels, and proved that the global solutions of problem \eqref{1.1} weakly converge to the solution of the corresponding stationary problem as time approaches infinity. Moreover, Liu and Tian \cite{MR4709392} established the well-posed of solutions and the existence of global attractors for problem \eqref{1.1} with a general source term.

Although significant advances related to problem \eqref{1.1} have been achieved in previous studies, the asymptotic behavior of solutions to \eqref{1.1} still requires further in-depth analysis, which is of crucial importance for understanding the evolution mechanisms of pseudo-parabolic systems. In order to introduce the main research objective of this paper, let us firstly review the relevant spaces, sets and notations used in the paper. Recall the weighted Sobolev space associated with the vector fields $X$ (see \cite{MR222474})
\begin{equation*}
H_X^1(\tilde{\Omega})=\left\{\varphi\in L^2(\tilde{\Omega})\mid X_i\varphi\in L^2(\tilde{\Omega}),\quad i=1,2,\cdots,m\right\},
\end{equation*}
which is a Hilbert space with the norm
\begin{equation*}
\|\varphi\|_{H_X^1(\tilde{\Omega})}^2=\|X\varphi\|_{L^2(\tilde{\Omega})}^2+\|\varphi\|_{L^2(\tilde{\Omega})}^2,\quad\forall\varphi\in H_X^1(\tilde{\Omega}),
\end{equation*}
where
\begin{equation*}
\|X\varphi\|_{L^2(\tilde{\Omega})}^2=\sum_{i=1}^m\|X_i\varphi\|_{L^2(\tilde{\Omega})}^2.
\end{equation*}
In what follows, we denote the $L^2(\Omega)$-inner product as $(\cdot,\cdot)$ and the $L^q(\Omega)$-norm ($1\leq q\leq\infty$) as $\|\cdot\|_q$. The closure of $C_0^\infty(\Omega)$ in $H_X^1(\tilde{\Omega})$ is represented as the Hilbert space $H_{X,0}^1(\Omega)$ that is equipped with the norm
\begin{equation}\label{Hfanshu}
\|\varphi\|_{H_{X,0}^1(\Omega)}^2=\|X\varphi\|_2^2+\|\varphi\|_2^2,\quad\forall\varphi\in H_{X,0}^1(\Omega).
\end{equation}
Let $\lambda_1>0$ be the first Dirichlet eigenvalue of $-\Delta_X$. If $(A_1)$-$(A_2)$ are satisfied, then Proposition 2.3 of \cite{MR3961341} states that
\begin{equation}\label{tezhengzhi}
\lambda_1\|\varphi\|_2^2\leq\|X\varphi\|_2^2,\ \ \ \forall\varphi\in H_{X,0}^1(\Omega).
\end{equation}
From Proposition 2.4 of \cite{MR4854159}, it follows that if $(A_1)$-$(A_3)$ hold, then the embedding
$H_{X,0}^1(\Omega)\hookrightarrow L^{p+1}(\Omega)$ is compact. Thus we obtain
\begin{equation}\label{x1111-33}
\|\varphi\|_{p+1}\leq C\|\varphi\|_{H_{X,0}^1(\Omega)},\ \ \ \forall\varphi\in H_{X,0}^1(\Omega),
\end{equation}
where $C>0$ denotes the optimal embedding constant.

For all $\varphi\in H_{X,0}^1(\Omega)$, the energy functional $J$ and the Nehari functional $I$ are defined by
\begin{align}
J(\varphi):=\frac{1}{2}\|X\varphi\|_2^2-\frac{1}{p+1}\|\varphi\|_{p+1}^{p+1}\label{J}
\end{align}
and
\begin{align}
I(\varphi):=\|X\varphi\|_2^2-\|\varphi\|_{p+1}^{p+1}.\label{I}
\end{align}
Clearly,
\begin{equation}\label{JI}
J(\varphi)=\frac{1}{p+1}I(\varphi)+\frac{p-1}{2(p+1)}\|X\varphi\|_2^2.
\end{equation}
Let
\begin{equation}\label{d}
d:=\inf_{\varphi\in \mathcal{N}}J(\varphi)
\end{equation}
denote the mountain pass level, where
\begin{equation}\label{N}
\mathcal{N}=\left\{\varphi\in H_{X,0}^1(\Omega)\setminus\{0\}\mid I(\varphi)=0\right\}
\end{equation}
is the Nehari manifold. By Lemma 3.1 in \cite{MR3961341}, one has
\begin{equation}\label{djingquezhi}
d=\frac{p-1}{2(p+1)}C_*^{-\frac{2(p+1)}{p-1}}
\end{equation}
with
\begin{equation}\label{qianru}
C_*:=\sup_{\varphi\in H_{X,0}^1(\Omega)\setminus\{0\}}\frac{\|\varphi\|_{p+1}}{\|X\varphi\|_2}.
\end{equation}
Moreover, we set
\begin{equation}\label{Nzhengfu}
\mathcal{N}_+:=\left\{\varphi\in H_{X,0}^1(\Omega)\mid I(\phi)>0\right\}\cup\{0\},\quad \mathcal{N}_-:=\left\{\varphi\in H_{X,0}^1(\Omega)\mid I(\phi)<0\right\}.
\end{equation}

The weak solution of \eqref{1.1} is defined as follows.
\begin{defn}\label{weaksolution}
Let $\phi_0\in H_{X,0}^1(\Omega)$ and $T>0$. A function $\phi\in L^\infty\left(0,T;H_{X,0}^1(\Omega)\right)$ with $\phi_t\in L^2(0,T;H_{X,0}^1(\Omega))$ is called a weak solution of \eqref{1.1} over $\Omega\times[0,T)$, if $\phi(x,0)=\phi_0(x)$ and the following equality holds:
\begin{equation}\label{weakdengshi}
(\phi_t,\psi)+(X\phi,X\psi)+(X\phi_t,X\psi)=(|\phi|^{p-1}\phi,\psi),\quad  \forall\psi\in H_{X,0}^1(\Omega),~t\in(0,T).
\end{equation}
\end{defn}

Let $\phi=\phi(t)$ be a weak solution to \eqref{1.1} with $t\in[0,T)$, where $T$ is the maximal existence time. Then according to \cite{MR3961341}, we have
\begin{equation}\label{nengliangdengshi}
J(\phi)+\int_0^t\|\phi_\tau\|_{H_{X,0}^1(\Omega)}^2d\tau=J(\phi_0),\quad \forall t\in[0,T)
\end{equation}
and
\begin{equation}\label{Idengshi}
\frac{d}{dt}\|\phi\|_{H_{X,0}^1(\Omega)}^2=-2I(\phi),\quad \forall t\in[0,T).
\end{equation}


In view of above, several natural questions arise:
\begin{description}
  \item{$(Q_1)$} According to \eqref{nengliangdengshi}, one can see that the energy functional $J(\phi)$ is decreasing with respect to $t$. However, can one estimate the decay rate of $J(\phi)$?
  \item{$(Q_2)$} Can one derive the upper bound of blow-up time in the case of $0\leq J(\phi_0)<d$? Even more interestingly, can one estimate the bounds of blow-up rate and the growth rate of blow-up solutions?
  \item{$(Q_3)$} Theorem 1.5 in \cite{MR4854159} establishes that as time tends to infinity, any global solution of problem \eqref{1.1} weakly converges to the solution of stationary problem associated with \eqref{1.1}. Can this result be further improved?
\end{description}

The principal objective of this paper is to address questions $(Q_1)$-$(Q_3)$.
As for $(Q_1)$, by constructing an appropriate auxiliary functional, we establish the exponential decay estimate of the energy functional.
As for $(Q_2)$, we note that although the upper bound of blow-up time under negative initial energy was derived via the energy method in \cite{MR4854159}, such an approach may not be directly extended to analyze blow-up phenomena for \eqref{1.1} with non-negative initial energy. In our analysis, we develop some novel estimates about solutions to \eqref{1.1} (i.e., Lemma \ref{yinliqi}) and combine them with the energy method to determine the upper bounds of blow-up time and rate under non-negative initial energy, while proving that blow-up solutions exhibit exponential growth patterns. Moreover, we obtain the lower bound estimate of blow-up rate when a finite-time blow-up occurs.
As for $(Q_3)$, through systematic application of differential-integral inequality techniques, we establish that as time approaches infinity, the global solution of \eqref{1.1} strongly converges to the solution of the stationary problem corresponding to \eqref{1.1}, thereby improving upon the weak convergence result in \cite{MR4854159}.

The structure of this paper is as follows. Section 2 is divided into subsections 2.1-2.3, which respectively state the main results obtained in this paper, containing the theorems, their proofs and some lemmas required for proofs.

\section{Main results}
\quad\quad
In this section, we focus on the questions $(Q_1)$-$(Q_3)$ above for \eqref{1.1} and applying the potential well method, the energy method and a series of differential integral inequality techniques to study the asymptotic behavior of global solutions and blow-up solutions to \eqref{1.1}.

\subsection{Decay estimate of the energy functional}
\quad\quad The following result concerns the exponential decay property of the energy functional, which is our answer to question $(Q_1)$.

\begin{thm}\label{decay}
Suppose that $(A_1)$-$(A_3)$ hold. Let $\phi=\phi(t)$, $t\in[0,T)$ be a weak solution to \eqref{1.1}. If $J(\phi_0)<d$ and $I(\phi_0)>0$, then $\phi$ exists globally and $\phi\in\mathcal{N}_+$ for $t\in[0,\infty)$. Moreover,  for $t\in[0,\infty)$,
\begin{equation*}
J(\phi)\leq\left(\|\phi_0\|_{H_{X,0}^1(\Omega)}^2+J(\phi_0)\right)e^{-\tilde{C}t},
\end{equation*}
where
\begin{equation*}
\tilde{C}=\frac{\varepsilon\lambda_1(p-1)}{2(\lambda_1+1)(p+1)+\lambda_1(p-1)}>0.
\end{equation*}
Here
\begin{equation}\label{vpsl}
\varepsilon=\frac{4(p+1)}{2+(p-1)\left[1-\left(\frac{J(\phi_0)}{d}\right)^{\frac{p-1}{2}}\right]^{-1}}>0
\end{equation}
and $\lambda_1$ is given in \eqref{tezhengzhi}.
\end{thm}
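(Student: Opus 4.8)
The plan is to separate the argument into a qualitative part (global existence together with invariance of $\mathcal{N}_+$) and a quantitative part (the exponential decay). First I would show that $\mathcal{N}_+$ is invariant along the flow. Since $I(\phi_0)>0$, relation \eqref{JI} gives $J(\phi_0)\ge\frac{p-1}{2(p+1)}\|X\phi_0\|_2^2$, and the energy identity \eqref{nengliangdengshi} forces $J(\phi(t))\le J(\phi_0)<d$ for every $t$. A continuity argument then shows $I(\phi(t))$ cannot vanish at a nonzero state: if $t_1$ were the first time with $I(\phi(t_1))=0$ and $\phi(t_1)\ne0$, then $\phi(t_1)\in\mathcal{N}$, so $J(\phi(t_1))\ge d$, contradicting $J(\phi(t_1))\le J(\phi_0)<d$; the state $\phi(t_1)=0$ already lies in $\mathcal{N}_+$ by definition, so in all cases $\phi(t)\in\mathcal{N}_+$. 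On $\mathcal{N}_+$ the bound $\|X\phi\|_2^2\le\frac{2(p+1)}{p-1}J(\phi_0)$ keeps the $H_{X,0}^1(\Omega)$-norm uniformly bounded, and global existence follows from the usual continuation criterion.

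The crux is a coercivity estimate on this invariant set. Using the embedding constant in \eqref{qianru}, $\|\phi\|_{p+1}^{p+1}\le C_*^{p+1}\|X\phi\|_2^{p+1}$, so $I(\phi)\ge\|X\phi\|_2^2\bigl(1-C_*^{p+1}\|X\phi\|_2^{p-1}\bigr)$. Feeding the energy bound above into this, together with the explicit value of $d$ from \eqref{djingquezhi}, the factor $C_*^{p+1}\|X\phi\|_2^{p-1}$ is controlled by $(J(\phi_0)/d)^{(p-1)/2}<1$, which yields $I(\phi)\ge\bigl(1-(J(\phi_0)/d)^{(p-1)/2}\bigr)\|X\phi\|_2^2$. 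Eliminating $\|X\phi\|_2^2$ between this bound and \eqref{JI} gives, after a short computation, the key inequality $2I(\phi)\ge\varepsilon J(\phi)$ with $\varepsilon$ exactly as in \eqref{vpsl}. I expect this bookkeeping — arranging the constants so the estimate closes with precisely the stated $\varepsilon$ — to be the main obstacle.

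For the decay itself I would introduce the auxiliary functional $F(t):=J(\phi(t))+\|\phi(t)\|_{H_{X,0}^1(\Omega)}^2$, chosen so that $F(0)$ matches the prefactor in the claim and $J(\phi)\le F$ holds automatically. Differentiating \eqref{nengliangdengshi} gives $\frac{d}{dt}J(\phi)=-\|\phi_t\|_{H_{X,0}^1(\Omega)}^2\le0$, while \eqref{Idengshi} gives $\frac{d}{dt}\|\phi\|_{H_{X,0}^1(\Omega)}^2=-2I(\phi)$, so $F'(t)\le-2I(\phi)\le-\varepsilon J(\phi)$ by the key inequality.

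Finally I would bound $F$ from above by a multiple of $J$: from $\|\phi\|_2^2\le\lambda_1^{-1}\|X\phi\|_2^2$ (via \eqref{tezhengzhi}) and $\|X\phi\|_2^2\le\frac{2(p+1)}{p-1}J(\phi)$ (via \eqref{JI} with $I(\phi)\ge0$) one obtains $\|\phi\|_{H_{X,0}^1(\Omega)}^2\le\frac{2(\lambda_1+1)(p+1)}{\lambda_1(p-1)}J(\phi)$, hence $F\le\frac{2(\lambda_1+1)(p+1)+\lambda_1(p-1)}{\lambda_1(p-1)}J$. Substituting into $F'\le-\varepsilon J$ produces $F'(t)\le-\tilde{C}F(t)$ with exactly the stated $\tilde{C}$, and Gronwall's inequality gives $F(t)\le F(0)e^{-\tilde{C}t}$. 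Since $J(\phi)\le F(t)$ and $F(0)=J(\phi_0)+\|\phi_0\|_{H_{X,0}^1(\Omega)}^2$, the desired estimate follows.
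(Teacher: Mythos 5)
Your proposal is correct and follows essentially the same route as the paper's own proof: the same auxiliary functional $\mathcal{F}(t)=\|\phi\|_{H_{X,0}^1(\Omega)}^2+J(\phi)$, the same coercivity estimate $I(\phi)\geq\left[1-\left(J(\phi_0)/d\right)^{\frac{p-1}{2}}\right]\|X\phi\|_2^2$ obtained from \eqref{qianru}, \eqref{djingquezhi} and the energy bound, and the same Gronwall closure producing exactly the constants $\varepsilon$ in \eqref{vpsl} and $\tilde{C}$. The only cosmetic differences are that the paper cites Theorem 1.4 of \cite{MR3961341} for global existence and the invariance of $\mathcal{N}_+$ where you sketch the potential-well argument yourself, and that the paper carries out the $\varepsilon$-bookkeeping inside the estimate for $\mathcal{F}'(t)$ (choosing $\varepsilon$ so the coefficient of $I(\phi)$ vanishes) rather than isolating your standalone inequality $2I(\phi)\geq\varepsilon J(\phi)$ --- algebraically the identical step.
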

\begin{proof}
Since $J(\phi_0)<d$ and $I(\phi_0)>0$, we obtain from \cite[Theorem 1.4]{MR3961341} that the problem \eqref{1.1} admits a global solution $\phi$ and $\phi\in \mathcal{N}_+$ for $t\in[0,\infty)$. Then we have $I(\phi)\geq0$ for $t\in[0,\infty)$ by \eqref{Nzhengfu}. It is clear from \eqref{nengliangdengshi} and \eqref{JI} that
\begin{equation}\label{33}
J(\phi_0)\geq J(\phi)\geq\frac{p-1}{2(p+1)}\|X\phi\|_2^2,\ \ \ t\geq0.
\end{equation}
By \eqref{qianru} and \eqref{33}, we have
\begin{equation*}
\|\phi\|_{p+1}\leq C_*\|X\phi\|_2\leq C_*\sqrt{\frac{2J(\phi_0)(p+1)}{p-1}},
\end{equation*}
which, along with \eqref{qianru} and \eqref{djingquezhi}, yields
\begin{equation}\label{36}
\begin{split}
\|\phi\|_{p+1}^{p+1}&\leq C_*^2\|\phi\|_{p+1}^{p-1}\|X\phi\|_2^2\\
&\leq C_*^{p+1}\left[\frac{2J(\phi_0)(p+1)}{p-1}\right]^{\frac{p-1}{2}}\|X\phi\|_2^2\\
&=\left(\frac{J(\phi_0)}{d}\right)^{\frac{p-1}{2}}\|X\phi\|_2^2.
\end{split}\end{equation}
Then by \eqref{I} and \eqref{36}, one has
\begin{equation}\label{35}
I(\phi)=\|X\phi\|_2^2-\|\phi\|_{p+1}^{p+1}\geq\left[1-\left(\frac{J(\phi_0)}{d}\right)^{\frac{p-1}{2}}\right]\|X\phi\|_2^2.
\end{equation}
Set
\begin{equation*}
\mathcal{F}(t):=\|\phi\|_{H_{X,0}^1(\Omega)}^2+J(\phi),\quad t\geq0.
\end{equation*}
Then we deduce from \eqref{Hfanshu}, \eqref{tezhengzhi} and \eqref{33} that
\begin{equation}\label{38}
\mathcal{F}(t)\leq\frac{\lambda_1+1}{\lambda_1}\|X\phi\|_2^2+J(\phi)
\leq\left[\frac{2(\lambda_1+1)(p+1)}{\lambda_1(p-1)}+1\right]J(\phi).
\end{equation}
Moreover, by \eqref{nengliangdengshi}, \eqref{Idengshi}, \eqref{JI} and \eqref{35}, we obtain, for any constant $\varepsilon>0$,
\begin{equation}\label{37}
\begin{split}
\mathcal{F}'(t)&=-2I(\phi)-\|\phi_t\|_{H_{X,0}^1(\Omega)}^2\\
&\leq-2I(\phi)-\varepsilon J(\phi)+\frac{\varepsilon}{p+1}I(\phi)+\frac{\varepsilon(p-1)}{2(p+1)}\|X\phi\|_2^2\\
&\leq\left\{\frac{\varepsilon}{p+1}+\frac{\varepsilon(p-1)}{2(p+1)}\left[1-\left(\frac{J(\phi_0)}{d}\right)^{\frac{p-1}{2}}\right]^{-1}-2\right\}I(\phi)
-\varepsilon J(\phi).
\end{split}\end{equation}
Let $\varepsilon$ be as in  \eqref{vpsl}. Then we conclude from \eqref{38} and \eqref{37} that
\begin{equation*}
\mathcal{F}'(t)\leq-\varepsilon J(\phi)
\leq -\frac{\varepsilon\lambda_1(p-1)}{2(\lambda_1+1)(p+1)+\lambda_1(p-1)}\mathcal{F}(t)=:-\tilde{C}\mathcal{F}(t).
\end{equation*}
Therefore
\begin{equation*}
\mathcal{F}(t)\leq \mathcal{F}(0)e^{-\tilde{C}t},
\end{equation*}
which gives
\begin{equation*}
J(\phi)\leq \left(\|\phi_0\|_{H_{X,0}^1(\Omega)}^2+J(\phi_0)\right)e^{-\tilde{C}t}.
\end{equation*}
This ends the proof.
\end{proof}

\begin{remark}
By $J(\phi_0)<d$, $I(\phi_0)>0$ and \eqref{JI}, we obtain $0<J(\phi_0)<d$. Thus, the decay estimate in Theorem \ref{decay} makes sense.
\end{remark}

\subsection{Bounds of blow-up time and rate}
\quad\quad In this subsection, the upper bounds of blow-up time and blow-up rate as well as growth rate of blow-up solutions to \eqref{1.1} with subcritical initial energy are estimated, and the lower bound of blow-up rate is presented when a blow-up occurs, which solve question $(Q_2)$.

\begin{thm}\label{baopo}
Suppose that $(A_1)$-$(A_3)$ hold. Let $\phi=\phi(t)$, $t\in[0,T)$ be a weak solution to \eqref{1.1}. If $J(\phi_0)<d$ and $I(\phi_0)<0$, then $\phi$ blows up at some finite time $T$, namely,
\begin{equation*}
\lim_{t\rightarrow T^-}\|\phi\|_{H_{X,0}^1(\Omega)}^2=\infty.
\end{equation*}
Furthermore,
\begin{description}
  \item{$(i)$} the upper bounds of blow-up time and rate are given by
  \begin{equation}\label{ut}
  T\leq\frac{\|u_0\|_{H_{X,0}^1(\Omega)}^2}{C_1(C_1-2)(d-J(\phi_0))},\hbox{~~~if~~}0\leq J(\phi_0)<d,
  \end{equation}
  and
  \begin{equation}\label{ur}
  \|\phi\|_{H_{X,0}^1(\Omega)}^2\leq\left\{
               \begin{array}{ll}
               \ds  \left[\frac{-C_1J(\phi_0)(C_1-2)}{\|\phi_0\|_{H_{X,0}^1(\Omega)}^{C_1}}\right]^{\frac{2}{2-C_1}}(T-t)^{-\frac{2}{C_1-2}},&\hbox{~if~~}J(\phi_0)<0; \\
               \vspace{-0.05in}\\
               \ds  \left[\frac{C_1(C_1-2)(d-J(\phi_0))}{\|\phi_0\|_{H_{X,0}^1(\Omega)}^{C_1}}\right]^{\frac{2}{2-C_1}}(T-t)^{-\frac{2}{C_1-2}},&\hbox{~if~~}0\leq J(\phi_0)<d,
               \end{array}
             \right.
  \end{equation}
  where
  \begin{equation}\label{C1}
  C_1=\left\{
               \begin{array}{ll}
               \ds  p+1,&\hbox{ if~~}J(\phi_0)<0; \\
               \vspace{-0.05in}\\
               \ds  \frac{\left(\epsilon_0^{p+1}-1\right)(p-1)}{\epsilon_0^{p+1}}+2,&\hbox{ if~~}0\leq J(\phi_0)<d.
               \end{array}
             \right.
  \end{equation}
  Here
  \begin{equation}\label{68}
  \epsilon_0=\left[\frac{p+1}{2}-C_*^{\frac{2(p+1)}{p-1}}J(\phi_0)(p+1)\right]^{\frac{1}{p-1}}>1,
  \end{equation}
  and $C_*>0$ is given in \eqref{qianru}.
  \item{$(ii)$} the exponential growth estimate of blow-up solutions is given by
  \begin{equation*}
  \|\phi\|_{H_{X,0}^1(\Omega)}^2\geq\left\{
               \begin{array}{ll}
               \ds  \|\phi_0\|_{H_{X,0}^1(\Omega)}^2e^{C_2t},&\hbox{ if }J(\phi_0)<0; \\
               \vspace{-0.05in}\\
               \ds  \|\phi_0\|_{H_{X,0}^1(\Omega)}^2e^{C_3t},&\hbox{ if }0\leq J(\phi_0)<d,
               \end{array}
             \right.
  \end{equation*}
  where
  \begin{equation}\label{C23}
  C_2=\frac{\lambda_1(p-1)}{\lambda_1+1},\quad
  C_3=\frac{\lambda_1(p-1)\left(\epsilon_0^2-1\right)}{\epsilon_0^2(\lambda_1+1)}.
  \end{equation}
  Here $\lambda_1$ and $\epsilon_0$ are given in \eqref{tezhengzhi} and \eqref{68}, respectively.
\end{description}
\end{thm}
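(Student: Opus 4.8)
The plan is to set $M(t):=\|\phi\|_{H_{X,0}^1(\Omega)}^2$ and organise everything around the first-order identity \eqref{Idengshi}, namely $M'(t)=-2I(\phi)$, together with its differentiated form. Testing \eqref{weakdengshi} with $\psi=\phi$ gives $\tfrac12 M'=(\phi_t,\phi)+(X\phi_t,X\phi)=-I(\phi)$, while testing with $\psi=\phi_t$ yields $\|\phi_t\|_{H_{X,0}^1(\Omega)}^2=(|\phi|^{p-1}\phi,\phi_t)-(X\phi,X\phi_t)$; differentiating $M'=2\|\phi\|_{p+1}^{p+1}-2\|X\phi\|_2^2$ and eliminating $(|\phi|^{p-1}\phi,\phi_t)$ then produces
\[
M''=2(p+1)\|\phi_t\|_{H_{X,0}^1(\Omega)}^2+2(p-1)(X\phi,X\phi_t).
\]
The finite-time blow-up and the upper bounds of $(i)$ will come from a concavity inequality for $M$, whereas the exponential lower bounds of $(ii)$ will come from a first-order Gronwall inequality; the two regimes $J(\phi_0)<0$ and $0\le J(\phi_0)<d$ differ only in the sharp constants, which is exactly where Lemma \ref{yinliqi} enters.

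First I would establish the invariance of $\mathcal{N}_-$: if $I(\phi(t_0))=0$ at some first time $t_0$, then since $I<0$ on $[0,t_0)$ one has $\|X\phi\|_2^2<\|\phi\|_{p+1}^{p+1}\le C_*^{p+1}\|X\phi\|_2^{p+1}$ by \eqref{qianru}, so $\|X\phi\|_2^2>C_*^{-2(p+1)/(p-1)}=\tfrac{2(p+1)}{p-1}d$ by \eqref{djingquezhi}; by continuity $\phi(t_0)\neq0$ and $\phi(t_0)\in\mathcal{N}$, whence $J(\phi(t_0))\ge d$, contradicting $J(\phi)\le J(\phi_0)<d$ from \eqref{nengliangdengshi}. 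Thus $I(\phi)<0$ and $M'>0$ on $[0,T)$, and the uniform bound $\|X\phi\|_2^2>\tfrac{2(p+1)}{p-1}d$ holds throughout. Rewriting \eqref{JI} as $-I=\tfrac{p-1}{2}\|X\phi\|_2^2-(p+1)J(\phi)$ and using this bound at $t=0$ gives $-I(\phi_0)\ge C_1\bigl(d-J(\phi_0)\bigr)$ when $0\le J(\phi_0)<d$, and $-I(\phi_0)\ge-(p+1)J(\phi_0)$ when $J(\phi_0)<0$; these convert the abstract constant below into the explicit quantities in \eqref{ut}--\eqref{ur}.

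For $(i)$ the core step is the concavity estimate
\[
M(t)\,M''(t)\ \ge\ \tfrac{C_1}{2}\,\bigl(M'(t)\bigr)^2,\qquad t\in[0,T),
\]
with $C_1$ as in \eqref{C1}. I would prove it by inserting the Cauchy--Schwarz inequality $\bigl(\tfrac12 M'\bigr)^2=\bigl((\phi,\phi_t)+(X\phi,X\phi_t)\bigr)^2\le\|\phi\|_{H_{X,0}^1(\Omega)}^2\,\|\phi_t\|_{H_{X,0}^1(\Omega)}^2$ into the formula for $M''$ and then absorbing the cross term $2(p-1)(X\phi,X\phi_t)$; for $J(\phi_0)<0$ this closes with $C_1=p+1$, while for $0\le J(\phi_0)<d$ the remaining deficit is controlled only after invoking the lower bounds of Lemma \ref{yinliqi}, which force the reduced exponent $C_1=\tfrac{(\epsilon_0^{p+1}-1)(p-1)}{\epsilon_0^{p+1}}+2$. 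Once the concavity holds, $\tfrac{d}{dt}\bigl(M'M^{-C_1/2}\bigr)=M^{-C_1/2-1}\bigl(MM''-\tfrac{C_1}{2}(M')^2\bigr)\ge0$, so $M'M^{-C_1/2}$ is nondecreasing and hence $M'(t)\ge B\,M(t)^{C_1/2}$ with $B=M'(0)M(0)^{-C_1/2}=-2I(\phi_0)\,\|\phi_0\|_{H_{X,0}^1(\Omega)}^{-C_1}$. Since $C_1>2$, separating variables forces $M\to\infty$ at a finite $T\le M(0)^{1-C_1/2}\big/\bigl[(\tfrac{C_1}{2}-1)B\bigr]$, and integrating $M'M^{-C_1/2}\ge B$ from $t$ to $T$ gives $M(t)\le\bigl[\tfrac{(C_1-2)B}{2}\bigr]^{2/(2-C_1)}(T-t)^{-2/(C_1-2)}$; substituting the two lower bounds for $-I(\phi_0)$ from the previous paragraph turns $B$ into the explicit coefficients of \eqref{ut} and \eqref{ur}.

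For $(ii)$ I would work directly with $M'=(p-1)\|X\phi\|_2^2-2(p+1)J(\phi)$ and the eigenvalue inequality \eqref{tezhengzhi}, which yields $\|X\phi\|_2^2\ge\tfrac{\lambda_1}{\lambda_1+1}M$. When $J(\phi_0)<0$ one has $-2(p+1)J(\phi)\ge0$ by \eqref{nengliangdengshi}, so $M'\ge(p-1)\|X\phi\|_2^2\ge C_2 M$; when $0\le J(\phi_0)<d$, Lemma \ref{yinliqi} supplies the sharpened bound $-I(\phi)\ge\tfrac{(p-1)(\epsilon_0^2-1)}{2\epsilon_0^2}\|X\phi\|_2^2$, equivalently $2(p+1)J(\phi)\le\tfrac{p-1}{\epsilon_0^2}\|X\phi\|_2^2$, whence $M'\ge\tfrac{(p-1)(\epsilon_0^2-1)}{\epsilon_0^2}\|X\phi\|_2^2\ge C_3 M$. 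In either case Gronwall's inequality gives $M(t)\ge M(0)e^{C_2 t}$, respectively $M(0)e^{C_3 t}$. The main obstacle throughout is the concavity estimate with the sharp constant $C_1$ in the borderline regime $0\le J(\phi_0)<d$: the cross term $(X\phi,X\phi_t)$ has no definite sign, and it is precisely the novel estimates of Lemma \ref{yinliqi}, built on the quantity $\epsilon_0>1$ (finite exactly because $J(\phi_0)<d$ by \eqref{68}), that keep $MM''-\tfrac{C_1}{2}(M')^2\ge0$ with the stated $C_1$ rather than the larger $p+1$.
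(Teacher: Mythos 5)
Your preliminary step (invariance of $\mathcal{N}_-$, giving $\|X\phi\|_2^2>\frac{2(p+1)}{p-1}d$ along the flow) is correct, and your part $(ii)$ is correct as well --- it is essentially the paper's own Part 2, including the use of Lemma \ref{yinliqi} to produce $C_3$. The problem is part $(i)$.

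There is a genuine gap at the central step of part $(i)$: the concavity inequality $M(t)M''(t)\ge\frac{C_1}{2}\left(M'(t)\right)^2$ is asserted but never proved, and the mechanism you describe cannot prove it. Your identity $M''=2(p+1)\|\phi_t\|_{H_{X,0}^1(\Omega)}^2+2(p-1)(X\phi,X\phi_t)$ is correct, and Cauchy--Schwarz gives $(M')^2\le 4M\|\phi_t\|_{H_{X,0}^1(\Omega)}^2$; but combining these two facts reduces your desired inequality to the pointwise condition
\[
2(p+1-C_1)\,\|\phi_t\|_{H_{X,0}^1(\Omega)}^2+2(p-1)\,(X\phi,X\phi_t)\ \ge\ 0 .
\]
When $J(\phi_0)<0$ you take $C_1=p+1$, so the first term vanishes identically and you need exactly $(X\phi,X\phi_t)\ge0$, i.e.\ monotonicity of $\|X\phi\|_2^2$ in time; nothing in the hypotheses provides this (for the equation without source, $(X\phi,X\phi_t)=-\left(\Delta_X\phi,(I-\Delta_X)^{-1}\Delta_X\phi\right)$ is nonpositive, so any positivity would have to be extracted from the source term, and you give no such argument). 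When $0\le J(\phi_0)<d$ there is slack $2(p+1-C_1)\|\phi_t\|_{H_{X,0}^1(\Omega)}^2$, but estimating the cross term by Young's inequality produces $\delta(p-1)\|X\phi\|_2^2+\frac{p-1}{\delta}\|X\phi_t\|_2^2$: the second piece can be absorbed into the slack, while the first is of size $O(M)$, so at best you obtain $MM''\ge\frac{C_1}{2}(M')^2-CM^2$, and the extra $-CM^2$ (not comparable to $(M')^2$, since $M'$ can be small while $M$ is large) destroys the concavity of $M^{-(C_1-2)/2}$ and with it the whole argument. Crucially, Lemma \ref{yinliqi} cannot repair this: it provides time-uniform lower bounds for $\|\phi\|_{p+1}$ (hence for $\|X\phi\|_2$) and contains no information about the sign or size of $(X\phi,X\phi_t)$, which involves $\phi_t$. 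This cross term is precisely the known obstruction to running Levine's concavity method directly on $\|\phi(t)\|_{H_{X,0}^1(\Omega)}^2$ for pseudo-parabolic equations, which is why that literature works instead with $\int_0^t\|\phi\|_{H_{X,0}^1(\Omega)}^2d\tau$ plus correction terms.

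For comparison, the paper's Part 1 never forms $M''$ at all: it pairs $M(t)=\frac12\|\phi\|_{H_{X,0}^1(\Omega)}^2$ with $F(t)=-J(\phi)$ (resp.\ $d-J(\phi)$), for which \eqref{nengliangdengshi} gives $F'(t)=\|\phi_t\|_{H_{X,0}^1(\Omega)}^2$ --- a perfect square, so no cross term can appear --- proves $M'=-I(\phi)\ge C_1F>0$ via Lemma \ref{yinliqi}, and then applies Cauchy--Schwarz in the form $MF'\ge\frac12(M')^2\ge\frac{C_1}{2}M'F$. Integrating $\frac{F'}{F}\ge\frac{C_1}{2}\frac{M'}{M}$ yields $F\ge F(0)M(0)^{-C_1/2}M^{C_1/2}$, hence the closable first-order inequality $M'\ge C_1F(0)M(0)^{-C_1/2}M^{C_1/2}$, from which blow-up, \eqref{ut} and \eqref{ur} all follow. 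To salvage your outline you would need either a genuinely different proof of your concavity inequality or a switch to this first-order (or the time-integrated) formulation; as written, part $(i)$ is not a proof.
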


\begin{thm}\label{xiajie}
Suppose that $(A_1)$-$(A_3)$ hold. If the solution $\phi(t)$ to \eqref{1.1} blows up in finite time in the sense of $\lim\limits_{t\rightarrow T^-}\|\phi\|_{H_{X,0}^1(\Omega)}^2=\infty$, then the lower bound of blow-up rate for blow-up solutions is
\begin{equation}\label{88}
\|\phi\|_{H_{X,0}^1(\Omega)}^2\geq\frac{1}{C_*^{\frac{2(p+1)}{p-1}}(p-1)^{\frac{2}{p-1}}}(T-t)^{-\frac{2}{p-1}},
\end{equation}
where $C_*$ is defined by \eqref{qianru}.
\end{thm}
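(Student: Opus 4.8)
The plan is to reduce the estimate to a single autonomous first-order differential inequality for the quantity $M(t):=\|\phi\|_{H_{X,0}^1(\Omega)}^2$ and then integrate it on a left-neighbourhood of the blow-up time $T$. Since $\phi$ blows up, I would first observe that $M(t)>0$ for every $t\in[0,T)$: otherwise $\phi(t_0)=0$ in $H_{X,0}^1(\Omega)$ at some $t_0$, and by uniqueness $\phi$ would coincide with the zero solution thereafter and could not blow up. This positivity is what makes the divisions in the separation-of-variables step below legitimate.

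Starting from the identity \eqref{Idengshi}, I would write
$$M'(t)=-2I(\phi)=2\|\phi\|_{p+1}^{p+1}-2\|X\phi\|_2^2\le 2\|\phi\|_{p+1}^{p+1},$$
simply discarding the nonpositive term $-2\|X\phi\|_2^2$. Next I would feed in the optimal embedding \eqref{qianru}, which gives $\|\phi\|_{p+1}\le C_*\|X\phi\|_2$ and hence $\|\phi\|_{p+1}^{p+1}\le C_*^{p+1}\|X\phi\|_2^{p+1}=C_*^{p+1}\bigl(\|X\phi\|_2^2\bigr)^{\frac{p+1}{2}}$. Because $\|X\phi\|_2^2\le M(t)$, this produces the closed inequality
$$M'(t)\le 2C_*^{p+1}M(t)^{\frac{p+1}{2}},\qquad t\in[0,T).$$

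With this in hand the remainder is elementary. Since $p>1$ the exponent $\tfrac{p+1}{2}>1$, so separating variables and integrating from $t$ to $s$ (with $t<s<T$) yields
$$\frac{2}{p-1}\left(M(t)^{-\frac{p-1}{2}}-M(s)^{-\frac{p-1}{2}}\right)\le 2C_*^{p+1}(s-t).$$
Letting $s\to T^-$ and using $M(s)\to\infty$ annihilates the second term on the left, giving $M(t)^{-\frac{p-1}{2}}\le (p-1)C_*^{p+1}(T-t)$; rearranging and raising to the power $\tfrac{2}{p-1}$ delivers exactly \eqref{88}.

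I do not expect a serious obstacle here: the estimate is driven entirely by the one-sided bound $M'\le 2C_*^{p+1}M^{\frac{p+1}{2}}$, and the only points needing a line of justification are the positivity of $M$ on $[0,T)$ (validating the separation of variables) and the limit $s\to T^-$ inside the integrated inequality, which is licensed precisely by the blow-up hypothesis $M(s)\to\infty$. The sharp constant comes for free from the optimal embedding constant $C_*$, which is why the resulting lower bound carries exactly the factor $C_*^{-\frac{2(p+1)}{p-1}}$.
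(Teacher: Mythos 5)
Your proposal is correct and follows essentially the same route as the paper's proof: both differentiate $\|\phi\|_{H_{X,0}^1(\Omega)}^2$ via \eqref{Idengshi}, discard the term $-\|X\phi\|_2^2$, invoke the optimal embedding \eqref{qianru} to close the inequality $M'\leq cM^{\frac{p+1}{2}}$, and integrate from $t$ to $T$ using the blow-up hypothesis, arriving at exactly the constant in \eqref{88} (the paper merely normalizes $M(t)=\tfrac12\|\phi\|_{H_{X,0}^1(\Omega)}^2$). Your extra remark justifying the positivity of $M(t)$ before separating variables is a point the paper passes over silently, so it is a welcome addition rather than a deviation.
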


The following lemma provides some estimates of the solution, which will be used in the proof of Theorem \ref{baopo}.
\begin{lemma}\label{yinliqi}
Suppose that $(A_1)$-$(A_3)$ hold. Let $\phi=\phi(t)$, $t\in[0,T)$ be a weak solution to \eqref{1.1}. If $0\leq J(\phi_0)<d$ and $I(\phi_0)<0$, then there is a constant $\epsilon_2>\epsilon_1:=C_*^{-\frac{2}{p-1}}$ such that
\begin{equation}\label{9}
\|\phi\|_{p+1}\geq\epsilon_2,\quad t\in[0,T),
\end{equation}
and
\begin{equation}\label{11}
\frac{\epsilon_2}{\epsilon_1}\geq\epsilon_0:=\left[\frac{p+1}{2}-C_*^{\frac{2(p+1)}{p-1}}J(\phi_0)(p+1)\right]^{\frac{1}{p-1}}>1,
\end{equation}
where $C_*>0$ is given in \eqref{qianru}.
\end{lemma}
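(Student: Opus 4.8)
The plan is to first show the unstable set $\mathcal{N}_-$ is invariant along the flow, which already yields a crude lower bound, and then to bootstrap that crude bound into the sharp constant $\epsilon_2$ by analyzing a suitable one-variable function built from the energy and the embedding constant. For the invariance, since $J(\phi_0)<d$ and $I(\phi_0)<0$, the energy identity \eqref{nengliangdengshi} gives $J(\phi(t))\le J(\phi_0)<d$ for all $t$. Using that $t\mapsto\phi(t)$ is continuous into $H_{X,0}^1(\Omega)$ (from $\phi_t\in L^2(0,T;H_{X,0}^1(\Omega))$), and hence $t\mapsto I(\phi(t))$ is continuous, I would run the standard continuity argument: if $I(\phi(t_0))=0$ occurred first at some $t_0$, then $I(\phi)<0$ on $[0,t_0)$, so by \eqref{qianru} the chain $\|X\phi\|_2^2<\|\phi\|_{p+1}^{p+1}\le C_*^{p+1}\|X\phi\|_2^{p+1}$ forces $\|X\phi\|_2^2>C_*^{-2(p+1)/(p-1)}=:\mu>0$; letting $t\uparrow t_0$ keeps $\phi(t_0)\neq 0$, so $\phi(t_0)\in\mathcal{N}$ and $J(\phi(t_0))\ge d$, contradicting $J(\phi(t_0))<d$. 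Thus $I(\phi)<0$ on $[0,T)$, and the same computation gives the crude bounds $\|X\phi\|_2^2>\mu$ and $\|\phi\|_{p+1}>\mu^{1/(p+1)}=\epsilon_1$ throughout.

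The heart of the matter is upgrading $\|X\phi\|_2^2>\mu$ to $\|X\phi\|_2^2\ge\epsilon_0^2\mu$. To this end I would introduce $\Phi(a):=\tfrac12 a-\tfrac{C_*^{p+1}}{p+1}a^{(p+1)/2}$ for $a\ge 0$. Combining \eqref{J} with the embedding \eqref{qianru} in the form $\|\phi\|_{p+1}^{p+1}\le C_*^{p+1}\|X\phi\|_2^{p+1}$ gives $\Phi(\|X\phi\|_2^2)\le J(\phi)\le J(\phi_0)$. A direct computation shows $\Phi$ increases strictly on $(0,\mu)$ and decreases strictly on $(\mu,\infty)$, with maximum $\Phi(\mu)=d$ (consistent with \eqref{djingquezhi}). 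Since $\|X\phi\|_2^2>\mu$ sits on the decreasing branch, the inequality $\Phi(\|X\phi\|_2^2)\le J(\phi_0)$ will force $\|X\phi\|_2^2\ge\epsilon_0^2\mu$ as soon as I verify $\Phi(\epsilon_0^2\mu)\ge J(\phi_0)$. Using $C_*^{p+1}\mu^{(p+1)/2}=\mu$ together with the defining relation $\epsilon_0^{p-1}=\tfrac{p+1}{2}-(p+1)J(\phi_0)/\mu$ extracted from \eqref{11}, this reduces, after dividing by $\epsilon_0^2-1>0$, to the single elementary inequality $\epsilon_0^{p-1}\le\tfrac{p+1}{2}$, which holds precisely because $J(\phi_0)\ge 0$.

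Finally I would feed $\|X\phi\|_2^2\ge\epsilon_0^2\mu$ back into the energy estimate: rearranging \eqref{J} with $J(\phi)\le J(\phi_0)$ gives $\|\phi\|_{p+1}^{p+1}\ge\tfrac{p+1}{2}\|X\phi\|_2^2-(p+1)J(\phi_0)\ge\tfrac{p+1}{2}\epsilon_0^2\mu-(p+1)J(\phi_0)$, and substituting the same expression for $(p+1)J(\phi_0)/\mu$ rewrites the right-hand side as $\mu[\tfrac{p+1}{2}(\epsilon_0^2-1)+\epsilon_0^{p-1}]$, which is $\ge\mu\,\epsilon_0^{p+1}$ again exactly when $\epsilon_0^{p-1}\le\tfrac{p+1}{2}$. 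Hence $\|\phi\|_{p+1}^{p+1}\ge\epsilon_0^{p+1}\mu=(\epsilon_0\epsilon_1)^{p+1}$, and choosing $\epsilon_2:=\epsilon_0\epsilon_1$ delivers both \eqref{9} and \eqref{11}. The main obstacle is exactly this sharpening: the naive route, using $I<0$ to pass from $\|X\phi\|_2^2>\mu$ directly to $\|\phi\|_{p+1}$, loses a power and yields only $\epsilon_0^{(p-1)/(p+1)}\epsilon_1<\epsilon_0\epsilon_1$, so one must genuinely exploit that the solution lives on the far, decreasing branch of $\Phi$ and that the hypothesis $J(\phi_0)\ge 0$ is what makes $\epsilon_0^{p-1}\le\tfrac{p+1}{2}$ — the inequality that drives both the lower bound on $\|X\phi\|_2^2$ and the final estimate.
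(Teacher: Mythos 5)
Your proof is correct, and it reaches the conclusion by a genuinely different route from the paper's. The paper works entirely with the scalar $\epsilon=\|\phi\|_{p+1}$ and the function $h(\epsilon)=\frac{1}{2C_*^2}\epsilon^2-\frac{1}{p+1}\epsilon^{p+1}$ (the analogue, in the $L^{p+1}$ variable, of your $\Phi$): it defines $\epsilon_2$ \emph{implicitly} as the root of $h(\epsilon_2)=J(\phi_0)$ on the decreasing branch, shows $\|\phi_0\|_{p+1}\geq\epsilon_2$, traps $\|\phi(t)\|_{p+1}$ above $\epsilon_2$ by an intermediate-value argument (a dip below $\epsilon_2$ would force a time $t_0$ with $\epsilon_1<\|\phi(t_0)\|_{p+1}<\epsilon_2$, where $h>J(\phi_0)$, contradicting the energy identity \eqref{nengliangdengshi}), and only afterwards extracts \eqref{11} from the implicit equation via \eqref{10}. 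You instead work with $a=\|X\phi\|_2^2$ and $\Phi(a)$, first proving invariance of $\mathcal{N}_-$ (which the paper's proof never needs, since its trapping bypasses $I$ entirely and never invokes the Nehari characterization \eqref{d} of $d$), then obtaining the gradient bound $\|X\phi\|_2^2\geq\epsilon_0^2\mu$ with $\mu:=C_*^{-\frac{2(p+1)}{p-1}}=\epsilon_1^{p+1}$, and finally feeding this back through the energy identity to get $\|\phi\|_{p+1}^{p+1}\geq\epsilon_0^{p+1}\mu$, so that your $\epsilon_2:=\epsilon_0\epsilon_1$ is \emph{explicit} and \eqref{11} holds with equality. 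Your two key reductions (both equivalent to $\epsilon_0^{p-1}\leq\frac{p+1}{2}$, i.e.\ to $J(\phi_0)\geq0$) are exactly the algebra the paper performs once in \eqref{10}; what your version buys is an explicit constant and a transparent identification of where the hypothesis $J(\phi_0)\geq0$ enters, at the cost of the extra invariance step and the gradient-to-$L^{p+1}$ bootstrap. (The paper's implicit $\epsilon_2$ is in general larger than yours, but only the ratio bound \eqref{11} is used downstream in Theorem \ref{baopo}, so nothing is lost.)

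Two small repairs. First, you divide twice by $\epsilon_0^2-1$ asserting it is positive, but $\epsilon_0>1$ is itself part of the lemma's conclusion and must be checked; it follows in one line from $J(\phi_0)<d$ and \eqref{djingquezhi}, since $\epsilon_0^{p-1}=\frac{p+1}{2}-(p+1)C_*^{\frac{2(p+1)}{p-1}}J(\phi_0)>\frac{p+1}{2}-(p+1)C_*^{\frac{2(p+1)}{p-1}}d=\frac{p+1}{2}-\frac{p-1}{2}=1$. Second, a cosmetic point in your closing remark: the ``naive route'' from $\|X\phi\|_2^2\geq\epsilon_0^2\mu$ and $I(\phi)<0$ yields $\|\phi\|_{p+1}>\epsilon_0^{\frac{2}{p+1}}\epsilon_1$ (exponent $\frac{2}{p+1}$, not $\frac{p-1}{p+1}$); either way it is weaker than $\epsilon_0\epsilon_1$, so the point you are making stands.
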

\begin{proof}
From \eqref{qianru}, $I(\phi_0)<0$ and \eqref{I}, we deduce that
\begin{equation*}
\frac{1}{C_*^2}\|\phi_0\|_{p+1}^2\leq\|X\phi_0\|_2^2<\|\phi_0\|_{p+1}^{p+1},
\end{equation*}
which gives
\begin{equation}\label{8}
\|\phi_0\|_{p+1}>C_*^{-\frac{2}{p-1}}=:\epsilon_1.
\end{equation}
By \eqref{J} and \eqref{qianru}, we obtain
\begin{equation}\label{7}
J(\phi)\geq\frac{1}{2C_*^2}\|\phi\|_{p+1}^2-\frac{1}{p+1}\|\phi\|_{p+1}^{p+1}.
\end{equation}
Let
\begin{equation*}
h(\epsilon):=\frac{1}{2C_*^2}\epsilon^2-\frac{1}{p+1}\epsilon^{p+1},\ \ \ \forall\epsilon\geq0.
\end{equation*}
Then $h(\epsilon)$ is increasing on $[0,\epsilon_1]$ and decreasing on $[\epsilon_1,\infty)$. By \eqref{djingquezhi}, we have
\begin{equation*}
h(\epsilon_1)=\sup\limits_{\epsilon\geq0}h(\epsilon)=d.
\end{equation*}
Noting that $J(\phi_0)<d$, there is a constant $\epsilon_2>\epsilon_1$ such that $h(\epsilon_2)=J(\phi_0)$. Let $\epsilon_*=\|\phi_0\|_{p+1}$. Then by \eqref{7}, we obtain $h(\epsilon_2)=J(\phi_0)\geq h(\epsilon_*)$, which, along with \eqref{8}, implies $\epsilon_2\leq\epsilon_*=\|\phi_0\|_{p+1}$.
We claim that $\|\phi\|_{p+1}\geq\epsilon_2$ for $t\in(0,T)$. Otherwise, there exists $t_0\in(0,T)$ such that $\|\phi(t_0)\|_{p+1}<\epsilon_2$. Since $\epsilon_1<\epsilon_2$, we can choose $t_0$ such that $\epsilon_1<\|\phi(t_0)\|_{p+1}<\epsilon_2$. Then from \eqref{7}, we deduce that
$$
J(\phi_0)=h(\epsilon_2)<h(\|\phi(t_0)\|_{p+1})\leq J(\phi(t_0)),
$$
which conflicts with \eqref{nengliangdengshi}. Thus, for $t\in[0,T)$, we have that $\|\phi\|_{p+1}\geq\epsilon_2$. Then by \eqref{qianru}, one has $C_*\|X\phi\|_2\geq\|\phi\|_{p+1}\geq\epsilon_2$. Hence, \eqref{9} holds.

Next, we prove \eqref{11}. Let $\epsilon_3=\epsilon_2/\epsilon_1>1$. Then by $h(\epsilon_2)=J(\phi_0)$ and $\epsilon_1=C_*^{-\frac{2}{p-1}}$, we obtain
\begin{equation*}
J(\phi_0)=h(\epsilon_1\epsilon_3)=\epsilon_1^2\epsilon_3^2\left(\frac{1}{2C_*^2}-\frac{\epsilon_1^{p-1}\epsilon_3^{p-1}}{p+1}\right)
=\frac{\epsilon_1^2\epsilon_3^2}{C_*^2}\left(\frac{1}{2}-\frac{\epsilon_3^{p-1}}{p+1}\right),
\end{equation*}
which, together with $J(\phi_0)\geq0$ and $\epsilon_3>1$, yields
\begin{equation}\label{10}
\frac{1}{2}-\frac{\epsilon_3^{p-1}}{p+1}=\frac{C_*^2J(\phi_0)}{\epsilon_1^2\epsilon_3^2}\leq\frac{C_*^2J(\phi_0)}{\epsilon_1^2}
=C_*^\frac{2(p+1)}{p-1}J(\phi_0).
\end{equation}
In accordance with \eqref{10}, \eqref{djingquezhi} and $J(\phi_0)<d$, we arrive at
\begin{equation*}
\frac{\epsilon_2}{\epsilon_1}=\epsilon_3\geq\left[\frac{p+1}{2}-C_*^{\frac{2(p+1)}{p-1}}J(\phi_0)(p+1)\right]^{\frac{1}{p-1}}
>\left(\frac{p+1}{2}-\frac{p-1}{2}\right)^{\frac{1}{p-1}}=1,
\end{equation*}
which gives \eqref{11}.
\end{proof}

\begin{proof}[Proof of Theorem \ref{baopo}]
The proof is divided into two parts.

{\bf Part 1: Upper bounds of blow-up time and rate.}

For $t\in[0,T)$, we define
\begin{equation}\label{18}
M(t):=\frac{1}{2}\|\phi\|_{H_{X,0}^1(\Omega)}^2
\end{equation}
and
\begin{equation}\label{17}
F(t):=\left\{
               \begin{array}{ll}
               \ds  -J(\phi),&\hbox{ if }J(\phi_0)<0; \\
               \vspace{-0.05in}\\
               \ds  d-J(\phi),&\hbox{ if }0\leq J(\phi_0)<d.
               \end{array}
             \right.
\end{equation}

When $J(\phi_0)<0$, it follows from \eqref{17}, \eqref{nengliangdengshi} and \eqref{J} that
\begin{equation}\label{15}
0<F(0)\leq F(t)=-\frac{1}{2}\|X\phi\|_2^2+\frac{1}{p+1}\|\phi\|_{p+1}^{p+1}\leq\frac{1}{p+1}\|\phi\|_{p+1}^{p+1}.
\end{equation}
According to \eqref{18}, \eqref{Idengshi}, \eqref{J}, \eqref{I}, \eqref{17} and \eqref{15}, we arrive at
\begin{equation}\label{19}
M'(t)=-I(\phi)=\frac{p-1}{p+1}\|\phi\|_{p+1}^{p+1}-2J(\phi)\geq(p+1)F(t).
\end{equation}
When $0\leq J(\phi_0)<d$, we deduce from \eqref{17}, \eqref{nengliangdengshi}, \eqref{J}, \eqref{djingquezhi}, \eqref{qianru} and Lemma \ref{yinliqi} that
\begin{equation}\label{16}
\begin{split}
0<F(0)\leq F(t)&=d-\frac{1}{2}\|X\phi\|_2^2+\frac{1}{p+1}\|\phi\|_{p+1}^{p+1}\\
&\leq d-\frac{1}{2C_*^2}\|\phi\|_{p+1}^2+\frac{1}{p+1}\|\phi\|_{p+1}^{p+1}\\
&\leq d-\frac{\epsilon_1^2}{2C_*^2}+\frac{1}{p+1}\|\phi\|_{p+1}^{p+1}\\
&=\frac{p-1}{2(p+1)}C_*^{-\frac{2(p+1)}{p-1}}-\frac{1}{2}C_*^{-\frac{2(p+1)}{p-1}}+\frac{1}{p+1}\|\phi\|_{p+1}^{p+1}\\
&\leq\frac{1}{p+1}\|\phi\|_{p+1}^{p+1}.
\end{split}\end{equation}
From \eqref{18}, \eqref{Idengshi}, \eqref{J}, \eqref{I}, \eqref{17}, \eqref{djingquezhi}, Lemma \ref{yinliqi} and \eqref{16}, we obtain that
\begin{equation}\label{12}
\begin{split}
M'(t)=-I(\phi)&=\frac{p-1}{p+1}\|\phi\|_{p+1}^{p+1}-2J(\phi)\\
&=\frac{p-1}{p+1}\|\phi\|_{p+1}^{p+1}+2F(t)-2d\\
&=\frac{p-1}{p+1}\|\phi\|_{p+1}^{p+1}+2F(t)-\frac{p-1}{p+1}C_*^{-\frac{2(p+1)}{p-1}}\\
&=\frac{p-1}{p+1}\|\phi\|_{p+1}^{p+1}+2F(t)-\frac{p-1}{p+1}\left(\frac{\epsilon_1}{\epsilon_2}\right)^{p+1}\epsilon_2^{p+1}\\
&\geq\frac{p-1}{p+1}\|\phi\|_{p+1}^{p+1}+2F(t)-\frac{p-1}{\epsilon_0^{p+1}(p+1)}\|\phi\|_{p+1}^{p+1}\\
&\geq \left[\frac{\left(\epsilon_0^{p+1}-1\right)(p-1)}{\epsilon_0^{p+1}}+2\right]F(t).
\end{split}\end{equation}
Then by \eqref{15}, \eqref{19}, \eqref{16} and \eqref{12}, we arrive at
\begin{equation}\label{20}
M'(t)\geq C_1F(t)>0,
\end{equation}
where $C_1$ is defined in \eqref{C1}.
Noting  \eqref{nengliangdengshi}, we have
\begin{equation*}
F'(t)=\|\phi_t\|_{H_{X,0}^1(\Omega)}^2,
\end{equation*}
which, together with \eqref{18}, \eqref{Hfanshu}, H\"{o}lder's inequality and \eqref{20}, implies
\begin{equation*}\begin{split}
M(t)F'(t)&=\frac{1}{2}\|\phi\|_{H_{X,0}^1(\Omega)}^2\|\phi_t\|_{H_{X,0}^1(\Omega)}^2\\
&\geq\frac{1}{2}\left(\|X\phi\|_2^2\|X\phi_t\|_2^2+2\|\phi\|_2\|\phi_t\|_2\|X\phi\|_2\|X\phi_t\|_2
               +\|\phi\|_2^2\|\phi_t\|_2^2\right)\\
&\geq\frac{1}{2}\left[(X\phi,X\phi_t)+(\phi,\phi_t)\right]^2\\
&=\frac{1}{2}(M'(t))^2\geq\frac{C_1}{2}M'(t)F(t).
\end{split}\end{equation*}
Therefore
\begin{equation*}
\frac{F'(t)}{F(t)}\geq\frac{C_1M'(t)}{2M(t)}.
\end{equation*}
Integrating it from $0$ to $t$, we obtain
\begin{equation*}
\frac{F(t)}{(M(t))^{\frac{C_1}{2}}}\geq\frac{F(0)}{(M(0))^{\frac{C_1}{2}}},
\end{equation*}
which, along with \eqref{20}, yields
\begin{equation}\label{22}
\frac{M'(t)}{(M(t))^{\frac{C_1}{2}}}\geq\frac{C_1F(0)}{(M(0))^{\frac{C_1}{2}}}.
\end{equation}
Integrating it from $0$ to $t$, one has
\begin{equation}\label{23}
(M(t))^{\frac{2-C_1}{2}}\leq(M(0))^{\frac{2-C_1}{2}}-\frac{C_1F(0)(C_1-2)}{2(M(0))^{\frac{C_1}{2}}}t.
\end{equation}
Since $C_1>2$, \eqref{23} cannot hold for all $t\geq0$, which implies that $M(t)$ blows up at some finite time $T$, namely,
\begin{equation}\label{25}
\lim_{t\rightarrow T^-}M(t)=\infty,
\end{equation}
and \eqref{ut} holds.
In addition, integrating \eqref{22} from $t$ to $T$ and using \eqref{25}, we deduce that
\begin{equation*}
M(t)\leq\left[\frac{C_1F(0)(C_1-2)}{2(M(0))^{\frac{C_1}{2}}}\right]^{\frac{2}{2-C_1}}(T-t)^{-\frac{2}{C_1-2}},
\end{equation*}
which gives \eqref{ur}.

{\bf Part 2: Exponential growth estimate.}

When $J(\phi_0)<0$, in accordance with \eqref{18}, \eqref{Idengshi}, \eqref{JI}, \eqref{nengliangdengshi}, \eqref{Hfanshu} and \eqref{tezhengzhi}, we have
\begin{equation*}\begin{split}
M'(t)=-I(\phi)&\geq\frac{p-1}{2}\|X\phi\|_2^2-(p+1)J(\phi_0)\\
&>\frac{p-1}{2}\|X\phi\|_2^2\\
&\geq\frac{\lambda_1(p-1)}{\lambda_1+1}M(t),
\end{split}\end{equation*}
which, along with \eqref{18}, yields
\begin{equation*}
\|\phi\|_{H_{X,0}^1(\Omega)}^2\geq\|\phi_0\|_{H_{X,0}^1(\Omega)}^2e^{C_2t},
\end{equation*}
where $C_2$ is defined in \eqref{C23}.

When $0\leq J(\phi_0)<d$, it follows from \eqref{18}, \eqref{Idengshi}, \eqref{JI}, \eqref{nengliangdengshi}, \eqref{djingquezhi}, Lemma \ref{yinliqi}, \eqref{Hfanshu} and \eqref{tezhengzhi} that
\begin{equation*}\begin{split}
M'(t)=-I(\phi)&\geq\frac{p-1}{2}\|X\phi\|_2^2-(p+1)J(\phi_0)\\
&>\frac{p-1}{2}\|X\phi\|_2^2-\frac{p-1}{2}C_*^{-\frac{2(p+1)}{p-1}}\\
&=\frac{p-1}{2}\left(\frac{\epsilon_1^2}{\epsilon_2^2}\|X\phi\|_2^2+\frac{\epsilon_2^2-\epsilon_1^2}{\epsilon_2^2}\|X\phi\|_2^2\right)
-\frac{p-1}{2C_*^2}\epsilon_1^2\\
&\geq\frac{(p-1)\left(\epsilon_0^2-1\right)}{2\epsilon_0^2}\|X\phi\|_2^2\\
&\geq\frac{\lambda_1(p-1)\left(\epsilon_0^2-1\right)}{\epsilon_0^2(\lambda_1+1)}M(t),
\end{split}\end{equation*}
which, together with \eqref{18}, implies
\begin{equation*}
\|\phi\|_{H_{X,0}^1(\Omega)}^2\geq\|\phi_0\|_{H_{X,0}^1(\Omega)}^2e^{C_3t},
\end{equation*}
where $C_3$ is defined in \eqref{C23}.
\end{proof}


\begin{proof}[Proof of Theorem \ref{xiajie}]
Define
\begin{equation}\label{18-}
M(t):=\frac{1}{2}\|\phi\|_{H_{X,0}^1(\Omega)}^2,\ \ \ t\in[0,T).
\end{equation}
Then by $\lim\limits_{t\rightarrow T^-}\|\phi\|_{H_{X,0}^1(\Omega)}^2=\infty$, we have
\begin{equation}\label{25-}
\lim_{t\rightarrow T^-}M(t)=\infty.
\end{equation}
By \eqref{18-}, \eqref{Idengshi}, \eqref{I}, \eqref{qianru} and \eqref{Hfanshu}, one can conclude that
\begin{equation}\label{26}
M'(t)=-\|X\phi\|_2^2+\|\phi\|_{p+1}^{p+1}\leq C_*^{p+1}\|X\phi\|_2^{p+1}\leq 2^{\frac{p+1}{2}}C_*^{p+1}(M(t))^{\frac{p+1}{2}},
\end{equation}
which gives
\begin{equation*}
\frac{M'(t)}{(M(t))^{\frac{p+1}{2}}}\leq2^{\frac{p+1}{2}}C_*^{p+1}.
\end{equation*}
Integrating it from $t$ to $T$ and using \eqref{25-}, we arrive at
\begin{equation*}
M(t)\geq\frac{1}{2C_*^{\frac{2(p+1)}{p-1}}(p-1)^{\frac{2}{p-1}}}(T-t)^{-\frac{2}{p-1}},
\end{equation*}
which, along with \eqref{18-}, implies \eqref{88} .
\end{proof}

\subsection{Asymptotic convergence of the global solution}
\quad\quad
This subsection establishes that the global solution of \eqref{1.1} strongly converge to the solution of \eqref{x18-11} as time approaches infinity, which answers question $(Q_3)$.

Before stating the result, we need to introduce the stationary problem corresponding to \eqref{1.1}, i.e., the following boundary value problem
\begin{align}\label{x18-11}
 \left\{\begin{array}{ll}
     \ds -\Delta_X\phi=|\phi|^{p-1}\phi,\quad &x\in\Omega,\\
     \ds \phi(x)=0,\quad &x\in\partial\Omega.
    \end{array}\right.
\end{align}
According to Theorem 1.1 in \cite{MR4354732,MR4491354}, one has that the problem \eqref{x18-11} admits multiple nontrivial weak solutions in $H_{X,0}^1(\Omega)$. We call $\phi\in H_{X,0}^1(\Omega)$ is a solution of \eqref{x18-11} if
\begin{equation}\label{x19-11}
\langle J'(\phi),\nu\rangle=(X\phi,X\nu)-\left(|\phi|^{p-1}\phi,\nu\right)=0
\end{equation}
for any $\nu\in H_{X,0}^1(\Omega)$, where $J'(\phi)$ is the Fr\'{e}chet derivative of $J(\phi)$, $\langle\cdot,\cdot\rangle$ represents the dual product between $H_{X,0}^1(\Omega)$ and $H_X^{-1}(\Omega)$, and $H_X^{-1}(\Omega)$ denotes the dual space of $H_{X,0}^1(\Omega)$. Define $\Psi$ as the set of all solutions to problem \eqref{x18-11}, i.e.,
\begin{equation*}
\Psi=\{\psi\in H_{X,0}^1(\Omega):\langle J'(\psi),\nu\rangle=0,\forall\nu\in H_{X,0}^1(\Omega)\}.
\end{equation*}

\begin{thm}\label{xx-33}
Suppose that $(A_1)$-$(A_3)$ hold. Let $\phi=\phi(t)$ be a global solution to \eqref{1.1}. Then there exists $\phi^*\in \Psi$ and an increasing sequence $\{t_k\}_{k=1}^\infty$ with $t_k\rightarrow\infty$ as $k\rightarrow\infty$ such that
\begin{equation*}
\lim_{k\rightarrow\infty}\|\phi(t_k)-\phi^*\|_{H_{X,0}^1(\Omega)}=0.
\end{equation*}
\end{thm}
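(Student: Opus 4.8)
The plan is to exploit the gradient-flow structure of \eqref{1.1}: the energy identity \eqref{nengliangdengshi} forces the dissipation $\int_0^\infty\|\phi_\tau\|_{H_{X,0}^1(\Omega)}^2\,d\tau$ to be finite, so that along a suitable sequence of times $\phi_t$ becomes negligible and the equation degenerates to the stationary one. Concretely, since $\phi$ is global, its trajectory is bounded in $H_{X,0}^1(\Omega)$ (a fact established in the course of proving the weak-convergence statement in \cite{MR4854159}); in particular $J(\phi(t))$ is bounded below and, being non-increasing by \eqref{nengliangdengshi}, converges. Hence $\int_0^\infty\|\phi_\tau\|_{H_{X,0}^1(\Omega)}^2\,d\tau=J(\phi_0)-\lim_{t\to\infty}J(\phi(t))<\infty$, so that $\liminf_{t\to\infty}\|\phi_t(t)\|_{H_{X,0}^1(\Omega)}=0$. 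I then pick an increasing sequence $t_k\to\infty$, taken from the co-null set of times at which $\phi_t(t)\in H_{X,0}^1(\Omega)$ and \eqref{weakdengshi} is valid, with $\|\phi_t(t_k)\|_{H_{X,0}^1(\Omega)}\to0$. Using boundedness, after passing to a subsequence I get $\phi(t_k)\rightharpoonup\phi^*$ weakly in $H_{X,0}^1(\Omega)$, and by the compact embedding $H_{X,0}^1(\Omega)\hookrightarrow L^{p+1}(\Omega)$ the convergence $\phi(t_k)\to\phi^*$ is strong in $L^{p+1}(\Omega)$, hence also in $L^2(\Omega)$ since $\Omega$ is bounded and $p+1>2$.

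Next I identify $\phi^*$ as an element of $\Psi$. Writing \eqref{weakdengshi} at $t=t_k$ for a fixed $\psi\in H_{X,0}^1(\Omega)$ and letting $k\to\infty$: the terms $(\phi_t(t_k),\psi)$ and $(X\phi_t(t_k),X\psi)$ vanish because $\|\phi_t(t_k)\|_{H_{X,0}^1(\Omega)}\to0$; the term $(X\phi(t_k),X\psi)$ converges to $(X\phi^*,X\psi)$ by weak convergence; and $(|\phi(t_k)|^{p-1}\phi(t_k),\psi)$ converges to $(|\phi^*|^{p-1}\phi^*,\psi)$, since the strong $L^{p+1}$ convergence gives $|\phi(t_k)|^{p-1}\phi(t_k)\to|\phi^*|^{p-1}\phi^*$ strongly in $L^{(p+1)/p}(\Omega)$. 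Thus $(X\phi^*,X\psi)=(|\phi^*|^{p-1}\phi^*,\psi)$ for every $\psi$, i.e.\ $\langle J'(\phi^*),\psi\rangle=0$, so $\phi^*\in\Psi$ in the sense of \eqref{x19-11}.

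The decisive step, which upgrades the weak convergence of \cite{MR4854159} to strong convergence, is showing $\|\phi(t_k)\|_{H_{X,0}^1(\Omega)}\to\|\phi^*\|_{H_{X,0}^1(\Omega)}$; in the Hilbert space $H_{X,0}^1(\Omega)$, weak convergence together with norm convergence yields strong convergence. The $L^2$ part $\|\phi(t_k)\|_2\to\|\phi^*\|_2$ is already furnished by the strong $L^2$ convergence, so only $\|X\phi(t_k)\|_2\to\|X\phi^*\|_2$ remains. For this I test \eqref{weakdengshi} at $t=t_k$ with $\psi=\phi(t_k)$, obtaining
\begin{equation*}
\|X\phi(t_k)\|_2^2=\|\phi(t_k)\|_{p+1}^{p+1}-(\phi_t(t_k),\phi(t_k))-(X\phi_t(t_k),X\phi(t_k)).
\end{equation*}
By Cauchy--Schwarz the last two terms are bounded by $\|\phi_t(t_k)\|_{H_{X,0}^1(\Omega)}\|\phi(t_k)\|_{H_{X,0}^1(\Omega)}\to0$, while $\|\phi(t_k)\|_{p+1}^{p+1}\to\|\phi^*\|_{p+1}^{p+1}$ by strong $L^{p+1}$ convergence; hence $\|X\phi(t_k)\|_2^2\to\|\phi^*\|_{p+1}^{p+1}$. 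Testing \eqref{x19-11} for $\phi^*$ with $\nu=\phi^*$ gives $\|X\phi^*\|_2^2=\|\phi^*\|_{p+1}^{p+1}$, so $\|X\phi(t_k)\|_2^2\to\|X\phi^*\|_2^2$, which completes the norm convergence and thus $\|\phi(t_k)-\phi^*\|_{H_{X,0}^1(\Omega)}\to0$.

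I expect the \emph{main obstacle} to be securing boundedness of the global trajectory in $H_{X,0}^1(\Omega)$ and the consequent lower bound on $J(\phi(t))$, which are needed both to extract the weak limit and to guarantee the finite dissipation integral, together with the careful selection of $t_k$ among the admissible times for which \eqref{weakdengshi} holds and $\phi_t(t_k)\in H_{X,0}^1(\Omega)$. Once these are settled, the norm-convergence identity obtained by testing with $\phi(t_k)$ is the clean mechanism that converts the weak limit into a strong one.
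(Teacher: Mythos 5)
Your overall architecture (finite dissipation $\Rightarrow$ a sequence $t_k$ with $\|\phi_t(t_k)\|_{H_{X,0}^1(\Omega)}\to0$ $\Rightarrow$ weak limit by compactness $\Rightarrow$ identification of the limit as an element of $\Psi$ $\Rightarrow$ upgrade to strong convergence) matches the paper's, and your closing mechanism is correct and in fact cleaner than the paper's: you test \eqref{weakdengshi} with $\psi=\phi(t_k)$ to get $\|X\phi(t_k)\|_2^2\to\|\phi^*\|_{p+1}^{p+1}=\|X\phi^*\|_2^2$ and then use ``weak convergence plus norm convergence implies strong convergence,'' whereas the paper expands $\langle J'(\phi_k)-J'(\phi_*),\phi_k-\phi_*\rangle=\|X(\phi_k-\phi_*)\|_2^2-\Theta$ and shows each piece vanishes. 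The identification of $\phi^*\in\Psi$ via continuity of the map $u\mapsto|u|^{p-1}u$ from $L^{p+1}$ to $L^{(p+1)/p}$ is also fine (the paper proves this by a.e.\ convergence and dominated convergence instead).

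However, there is a genuine gap at the foundation: you assume the \emph{whole trajectory} of a global solution is bounded in $H_{X,0}^1(\Omega)$, attributing this to \cite{MR4854159}. That claim is unproven here and is stronger than anything actually available. Since $\frac{d}{dt}\|\phi\|_{H_{X,0}^1(\Omega)}^2=-2I(\phi)$ carries no sign information at supercritical energy levels, orbit boundedness is not known for arbitrary global solutions; and even after finite dissipation is secured, $\int_0^\infty\|\phi_\tau\|_{H_{X,0}^1(\Omega)}^2\,d\tau<\infty$ only controls growth of $\|\phi(t)\|_{H_{X,0}^1(\Omega)}$ like $\sqrt{t}$, so boundedness cannot be recovered that way either. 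The cited reference (whose Theorem 1.5, as described in this paper, is weak sequential convergence) would itself only need and produce boundedness along a sequence, so the citation does not plug the hole. You use the orbit boundedness for two distinct purposes, and each must be replaced. (i) The lower bound on $J(\phi(t))$, needed before you can even speak of finite dissipation: the paper proves $J(\phi(t))\geq0$ by contradiction using its own blow-up result (Theorem \ref{baopo}) — if $J(\phi(t_0))<0$ for some $t_0$, then \eqref{JI} forces $I(\phi(t_0))<0$, and since $J(\phi(t_0))<0<d$, the solution restarted at $t_0$ blows up in finite time, contradicting globality. This dichotomy argument is the missing idea. (ii) Boundedness of $\{\phi(t_k)\}$, needed to extract the weak limit and to kill the terms $(\phi_t(t_k),\phi(t_k))$ and $(X\phi_t(t_k),X\phi(t_k))$ in your norm identity: the paper gets this \emph{only along the sequence}, from $\|J'(\phi(t_k))\|_{H_X^{-1}(\Omega)}\leq(\lambda_1^{-1/2}+1)\|\phi_t(t_k)\|_{H_{X,0}^1(\Omega)}\to0$, which gives $|I(\phi(t_k))|\leq c\,\|\phi(t_k)\|_{H_{X,0}^1(\Omega)}$ and hence, via $J(\phi_0)\geq J(\phi(t_k))=\frac{1}{p+1}I(\phi(t_k))+\frac{p-1}{2(p+1)}\|X\phi(t_k)\|_2^2$ together with \eqref{Hfanshu} and \eqref{tezhengzhi}, a quadratic inequality bounding $\|\phi(t_k)\|_{H_{X,0}^1(\Omega)}$. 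With these two repairs your argument goes through; as written, it rests on an unestablished boundedness assertion.
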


\begin{proof}
Let $\phi=\phi(t)$ be the global solution to problem \eqref{1.1}.
First, we prove
\begin{equation}\label{x1-33}
0\leq J(\phi)\leq J(\phi_0),\quad \forall t\in[0,\infty).
\end{equation}
From \eqref{nengliangdengshi}, it follows that $J(\phi(t))$ is nonincreasing with respect to $t$ and $J(\phi)\leq J(\phi_0)$ holds for all $t\in[0,\infty)$. We claim that $J(\phi)\geq 0$ holds for all $t\in[0,\infty)$. If not, there would exist $t_0\in[0,\infty)$ such that $J(\phi(t_0))<0$. Combining this with \eqref{JI}, we have $I(\phi(t_0))<0$. According to Theorem \ref{baopo}, $\phi$ would blow up in finite time, which contradicts the fact that $\phi$ is a global solution to problem \eqref{1.1}. Therefore, \eqref{x1-33} holds.

From \eqref{x1-33} and the fact that $J(\phi(t))$ is non-increasing with respect to $t$ on $[0,\infty)$, it follows that there exists a positive constant $C_0\in[0,J(\phi_0)]$ such that
\begin{equation}\label{x2-33}
\lim\limits_{t\rightarrow\infty}J(\phi)=C_0.
\end{equation}
Letting $t\rightarrow\infty$ in \eqref{nengliangdengshi}, we obtain
\begin{equation}\label{x3-33}
\int_0^\infty\|\phi'(\tau)\|_{H_{X,0}^1(\Omega)}^2d\tau=J(\phi_0)-C_0\leq J(\phi_0),
\end{equation}
which implies the existence of an increasing sequence $\{t_k\}_{k=1}^\infty$ with $t_k\rightarrow\infty)$ as $k\rightarrow\infty$ such that
\begin{equation}\label{x4-33}
\lim_{k\rightarrow\infty}\|\phi'(t_k)\|_{H_{X,0}^1(\Omega)}=0.
\end{equation}
According to \eqref{J} and \eqref{weakdengshi}, for any $\varphi\in H_{X,0}^1(\Omega)$, we have
\begin{equation}\label{x5-33}
\begin{split}
\langle J'(\phi(t)),\varphi\rangle&=\frac{d}{d\tau}J(\phi(t)+\tau\varphi)\bigg|_{\tau=0}\\
&=\left[\left(X(\phi(t)+\tau\varphi),X\varphi\right)-\left(|\phi(t)+\tau\varphi|^{p-1}(\phi(t)+\tau\varphi),\varphi\right)\right]\big|_{\tau=0}\\
&=(X\phi(t),X\varphi)-\left(|\phi(t)|^{p-1}\phi(t),\varphi\right)\\
&=-(\phi'(t),\varphi)-(X\phi'(t),X\varphi).
\end{split}
\end{equation}
Using \eqref{x5-33}, \eqref{Hfanshu} and \eqref{tezhengzhi}, we deduce
\begin{equation*}
\begin{split}
\|J'(\phi(t_k))\|_{H_X^{-1}(\Omega)}&=\sup_{\|\varphi\|_{H_{X,0}^1(\Omega)}\leq1}\big|\langle J'(\phi(t_k)),\varphi\rangle\big|\\
&\leq\sup_{\|\varphi\|_{H_{X,0}^1(\Omega)}\leq1}\left(\|\phi'(t_k)\|_2\|\varphi\|_2+\|X\phi'(t_k)\|_2\|X\varphi\|_2\right)\\
&\leq\sup_{\|\varphi\|_{H_{X,0}^1(\Omega)}\leq1}\|X\varphi\|_2\left(\lambda_1^{-\frac{1}{2}}\|\phi'(t_k)\|_2+\|X\phi'(t_k)\|_2\right)\\
&\leq\left(\lambda_1^{-\frac{1}{2}}+1\right)\|\phi'(t_k)\|_{H_{X,0}^1(\Omega)},
\end{split}
\end{equation*}
which, combined with \eqref{x4-33}, yields
\begin{equation}\label{x6-33}
\lim_{k\rightarrow\infty}\|J'(\phi(t_k))\|_{H_X^{-1}(\Omega)}=0.
\end{equation}
So there exists a constant $c>0$ such that
\begin{equation*}
\begin{split}
|I(\phi(t_k))|&=|\langle J'(\phi(t_k)),\phi(t_k)\rangle|\\
&\leq\|J'(\phi(t_k))\|_{H_X^{-1}(\Omega)}\|\phi(t_k)\|_{H_{X,0}^1(\Omega)}\\
&\leq c\|\phi(t_k\|_{H_{X,0}^1(\Omega)}.
\end{split}
\end{equation*}
It follows from \eqref{x4-33}, \eqref{Hfanshu} and \eqref{tezhengzhi} that
\begin{equation*}
\begin{split}
J(\phi_0)+\frac{c}{p+1}\|\phi(t_k)\|_{H_{X,0}^1(\Omega)}&\geq J(\phi(t_k))-\frac{1}{p+1}I(\phi(t_k))\\
&=\frac{p-1}{2(p+1)}\|X\phi(t_k)\|_2^2\\
&\geq\frac{(p-1)\lambda_1}{2(p+1)(1+\lambda_1)}\|\phi(t_k)\|_{H_{X,0}^1(\Omega)}^2.
\end{split}
\end{equation*}
Thus, there exists a constant $K_1>0$ such that
\begin{equation}\label{x8-33}
\|\phi(t_k)\|_{H_{X,0}^1(\Omega)}\leq K_1, \quad k=1,2,\cdots.
\end{equation}
Since the embedding $H_{X,0}^1(\Omega)\hookrightarrow L^{p+1}(\Omega)$ is compact, there exists an increasing subsequence of $\{t_k\}_{k=1}^\infty$ (still denoted as $\{t_k\}_{k=1}^\infty$) and $u_*\in H_{X,0}^1(\Omega)$ such that, as $k\rightarrow\infty$ (denote $\phi_k:=\phi(t_k)$):
\begin{align}
&\phi_k\rightharpoonup \phi_* \hbox{ weakly in } H_{X,0}^1(\Omega),\label{x9-33}\\
&\phi_k\rightarrow \phi_* \hbox{ strongly in } L^{p+1}(\Omega).\label{x10-33}
\end{align}
By \eqref{J}, we have
\begin{equation*}
\begin{split}
\quad\langle J'(\phi_k),\phi_k-\phi_*\rangle&=\frac{d}{d\tau}J(\phi_k+\tau (\phi_k-\phi_*))\bigg|_{\tau=0}\\
&=\big\{\left(X[\phi_k+\tau(\phi_k-\phi_*)],X(\phi_k-\phi_*)\right)\\
&\quad-\left(|\phi_k+\tau(\phi_k-\phi_*)|^{p-1}[\phi_k+\tau(\phi_k-\phi_*)],\phi_k-\phi_*\right)\big\}\big|_{\tau=0}\\
&=\left(X\phi_k,X(\phi_k-\phi_*)\right)-\left(|\phi_k|^{p-1}\phi_k,\phi_k-\phi_*\right).
\end{split}
\end{equation*}
Similarly,
\begin{equation}\label{x12-33}
\langle J'(\phi_*),\phi_k-u_*\rangle=(X\phi_*,X(\phi_k-\phi_*))-\left(|\phi_*|^{p-1}\phi_*,\phi_k-\phi_*\right).
\end{equation}
Thus, we obtain
\begin{equation}\label{x16-33}
\quad \langle J'(\phi_k)-J'(\phi_*),\phi_k-\phi_*\rangle=\|X(\phi_k-\phi_*)\|_2^2-\Theta,
\end{equation}
where $$\Theta:=\left(|\phi_k|^{p-1}\phi_k-|\phi_*|^{p-1}\phi_*,\phi_k-\phi_*\right).$$
Then by H\"{o}lder inequality, \eqref{x1111-33} and \eqref{x8-33}, one has
\begin{equation*}
\begin{split}
|\Theta|&\leq\|\phi_*^p-\phi_k^p\|_{\frac{p+1}{p}}\|\phi_k-\phi_*\|_{p+1}\\
&\leq\left(\|\phi_*\|_{p+1}^p+\|\phi_k\|_{p+1}^p\right)\|\phi_k-\phi_*\|_{p+1}\\
&\leq\left(\|\phi_*\|_{p+1}^p+C^p\|\phi_k\|_{H_{X,0}^1(\Omega)}^p\right)\|\phi_k-\phi_*\|_{p+1}\\
&\leq\left(\|\phi_*\|_{p+1}^p+C^pK_1^p\right)\|\phi_k-\phi_*\|_{p+1},
\end{split}
\end{equation*}
which, combined with \eqref{x10-33}, implies
\begin{equation}\label{x11-33}
\lim_{k\rightarrow\infty}|\Theta|=0.
\end{equation}
From \eqref{x9-33}, \eqref{x10-33} and \eqref{x12-33}, we obtain
\begin{equation}\label{x13-33}
\lim_{k\rightarrow\infty}\langle J'(\phi_*),\phi_k-\phi_*\rangle=0.
\end{equation}
By \eqref{x8-33}, we have
\begin{equation*}
\begin{split}
\big|\langle J'(\phi_k),\phi_k-u_*\rangle\big|&\leq\|J'(\phi_k)\|_{H_X^{-1}(\Omega)}\left(\|\phi_k\|_{H_{X,0}^1(\Omega)}+\|\phi_*\|_{H_{X,0}^1(\Omega)}\right)\\
&\leq\|J'(\phi_k)\|_{H_X^{-1}(\Omega)}\left(K_1+\|\phi_*\|_{H_{X,0}^1(\Omega)}\right),
\end{split}
\end{equation*}
and combining this with \eqref{x6-33}, we arrive at
\begin{equation}\label{x15-33}
\lim_{k\rightarrow\infty}\big|\langle J'(\phi_k),\phi_k-\phi_*\rangle\big|=0.
\end{equation}
It follows from \eqref{x16-33}, \eqref{x11-33}, \eqref{x13-33} and \eqref{x15-33} that
\begin{equation*}
\lim_{k\rightarrow\infty}\|X(\phi_k-\phi_*)\|_2^2=\lim_{k\rightarrow\infty}\left[\langle J'(\phi_k)-J'(\phi_*),\phi_k-\phi_*\rangle+\Theta\right]=0.
\end{equation*}
By \eqref{Hfanshu} and \eqref{tezhengzhi}, we conclude
\begin{equation*}
\lim_{k\rightarrow\infty}\|\phi_k-\phi_*\|_{H_{X,0}^1(\Omega)}^2=0.
\end{equation*}

Finally, we prove $\phi_*\in \Psi$. From \eqref{x10-33}, there exists a subsequence of $\{\phi_k\}_{k=1}^\infty$ (still denoted as $\{\phi_k\}_{k=1}^\infty$) such that, as $k\rightarrow\infty$,
\begin{equation}\label{x21-33}
\phi_k(x)\rightarrow \phi_*(x) \hbox{ a.e. in } \Omega,
\end{equation}
and there exists $w\in L^{p+1}(\Omega)$ satisfying, for all $k$,
\begin{equation}\label{x22-33}
|\phi_k(x)|\leq w(x) \hbox{ a.e. in } \Omega.
\end{equation}
By \eqref{x21-33}, we conclude that, as $k\rightarrow\infty$,
\begin{equation}\label{x24-33}
|\phi_k(x)|^{p-1}\phi_k(x)\nu(x)\rightarrow |\phi_*(x)|^{p-1}\phi_*(x)\nu(x) \hbox{ a.e. in } \Omega
\end{equation}
for $\nu\in H_{X,0}^1(\Omega)$, and it follows from \eqref{x22-33} that, for all $k$,
\begin{equation}\label{x25-33}
\left||\phi_k(x)|^{p-1}\phi_k(x)\nu(x)\right|\leq \left||w(x)|^{p-1}w(x)\nu(x)\right| \hbox{ a.e. in } \Omega.
\end{equation}
Applying H\"{o}lder inequality and \eqref{x1111-33}, we obtain
\begin{equation*}
\int_\Omega\left||w(x)|^{p-1}w(x)\nu(x)\right|dx\leq\left\||w(x)|^p\right\|_{\frac{p+1}{p}}\|\nu(x)\|_{p+1}\leq C\|w(x)\|_{p+1}^p\|\nu(x)\|_{H_{X,0}^1(\Omega)}.
\end{equation*}
Combining $w\in L^{p+1}(\Omega)$ and $\nu\in H_{X,0}^1(\Omega)$, one has
\begin{equation*}
\int_\Omega\left||w(x)|^{p-1}w(x)\nu(x)\right|dx<\infty.
\end{equation*}
Hence,
\begin{equation}\label{vvv}
|w(x)|^{p-1}w(x)\nu(x)\in L^1(\Omega).
\end{equation}
By the Lebesgue dominated convergence theorem, together with \eqref{x24-33}, \eqref{x25-33} and \eqref{vvv}, it follows that, for any $\nu\in H_{X,0}^1(\Omega)$,
\begin{equation}\label{x23-33}
\left(|\phi_k|^{p-1}\phi_k,\nu\right)\rightarrow\left(|\phi_*|^{p-1}\phi_*,\nu\right) \mbox{  as  } k\rightarrow\infty.
\end{equation}
Since
\begin{equation*}
\langle J'(\phi_k),\nu\rangle=(X\phi_k,X\nu)-\left(|\phi_k|^{p-1}\phi_k,\nu\right),
\end{equation*}
letting $k\rightarrow \infty$ and using \eqref{x6-33}, \eqref{x9-33} and \eqref{x23-33}, we obtain
\begin{equation*}
0=(X\phi_*,X\nu)-\left(|\phi_*|^{p-1}\phi_*,\nu\right).
\end{equation*}
Therefore, $\phi_*\in \Psi$.
\end{proof}

\begin{remark}
Following Theorem 1.4 in \cite{MR3961341}, we note that the global solution decays exponentially to zero as $t\rightarrow\infty$ when the solution satisfies some initial conditions, while our Theorem \ref{xx-33} demonstrates that any global solution of \eqref{1.1} strongly converge to the solution of the stationary problem \eqref{x18-11} as $t\rightarrow \infty$, which improves upon the weak convergence result in Theorem 1.5 of \cite{MR4854159}.
\end{remark}

\section*{Declarations}
{\bf Funding:} this work was supported by the National Natural Science Foundation of China (12171339).

\noindent {\bf Ethical Approval:} not applicable.

\noindent {\bf Informed Consent:} not applicable.

\noindent {\bf Author Contributions:} each of the authors contributed to each part of this study equally. All authors read and proved the final vision of the manuscript.

\noindent {\bf Data Availability Statement:} not applicable. 

\noindent {\bf Conflict of Interest:} the authors declare that they have no conflict of interest.

\noindent {\bf Clinical Trial Number:} not applicable.



\end{document}